\newtheorem{theorem}{Theorem}
\newcommand{\rA}{{\rm A}}
\newcommand{\rB}{{\rm B}}
\newcommand{\ra}{{\rm a}}
\newcommand{\rb}{{\rm b}}
\begin{document}

\title{Teleportation is necessary for faithful quantum state transfer \\ through noisy channels of maximal rank}
\thanks{The authors acknowledge support from the Emmy Noether Program of the DFG in Germany}

\author{Raffaele Romano}

\email{rromano@mpl.mpg.de}

\author{Peter van Loock}

\affiliation{Max-Planck Institute for the Science of Light, Institute of Theoretical Physics I
Universit\"{a}t Erlangen-N\"{u}rnberg, Staudstrasse 7/B2, 91058 Erlangen, Germany}

\begin{abstract}

\noindent Quantum teleportation enables deterministic and
faithful transmission of quantum states,
provided a maximally entangled state is pre-shared between
sender and receiver, and a one-way classical channel is available.
Here, we prove that these resources are not only sufficient,
but also necessary, for deterministically and faithfully sending quantum states through any fixed noisy
channel of maximal rank, when a single use of the cannel is admitted. In other words, for this family of channels,
there are no other protocols, based on different (and possibly cheaper) sets of resources, capable of replacing quantum
teleportation.

\end{abstract}

\pacs{03.67.Hk, 03.67.-a}

\keywords{Quantum Teleportation, noisy channels, quantum control}

\maketitle


{\it Introduction.---}
It is a well established fact that information processing devices working at the quantum level
can outperform the corresponding classical devices~\cite{niel}. Prominent examples for this are
given for instance by quantum teleportation~\cite{benn2}, dense coding~\cite{benn4}, and quantum key distribution~\cite{benn5}.

A fundamental ingredient for the implementation of quantum technologies is the ability to faithfully
transmit quantum states of, say, an $N-$level system from a sender A to a receiver B. Typically, transmission
is accompanied by decoherence, and the system loses its special properties, because of its interaction with the external
environment. This irreversible process is described by a completely positive trace-preserving map
$\rho^{\rA} \rightarrow \rho^{\rB} = \varepsilon [\rho^{\rA}]$, where $\rho^{\rA}$ ($\rho^{\rB}$) is the $N \times N$
density matrix representing the state before (after) the transmission. In order to suppress the noise,
and achieve an improved transmission, some actions have to be performed by A and B, based on
the {\it resources} which are available to them. In the ideal case, a faithful and deterministic correction of the
channel is achieved, corresponding to turning $\varepsilon$ into the identity channel ${\mathcal I}$ with
unit probability.

The study of the aforementioned resources is fundamental both in the finite and asymptotic regimes. The latter
case, where an arbitrarily large number of uses of $\varepsilon$ is admitted, is highly relevant from an information
theoretical perspective (e.g., for the derivation of coding theorems)~\cite{deve}. However, practical implementations
always rely on the finite regime; therefore, we will limit our attention to this case.
In this context, several schemes have been proposed in the past two decades, basically falling into two
large families: Quantum Error Correction Codes (QECCs)
and protocols relying on Quantum Teleportation (QT).

The essence of QECCs is that the state to be transmitted is encoded in a higher-dimensional system,
which is then sent through several independent channels. The state is finally recovered by means of
suitable measurements and operations at the receiving station. QECCs exist that provide {\it universal}
protection of the transmitted state, that is, protection against an arbitrary channel $\varepsilon$~\cite{shor,knil,benn}.
In QT, transmission is based on a maximally entangled pair shared between sender and receiver, and on classical
communication~\cite{benn2}. This protocol will be shortly reviewed. Since the physical channel described by $\varepsilon$
is not used at all (except ``off-line", for entanglement distribution and distillation~\cite{benn3}), the protocol
automatically provides universal protection of the transmitted state.

Protecting the transmission of quantum states through noisy channels, by using either QECCs or QT, is
a difficult task, due to, in one case, the implementation of operations on multiple systems, and in the other case,
the generation (and preservation) of bipartite maximal entanglement. However, when the details of the physical channel are
known, one has to deal with a {\it specific} noisy action, and universal protocols are not needed. In this
case, a reduction of the resources needed for faithful and deterministic transmission is expected. For example,
the dimension of the enlarged spaces of QECCs is lowered with respect to the general case when dealing with
specific noisy channels.

Given a general set of local and non-local resources, and a fixed channel $\varepsilon$,
it is thus of interest to answer the following question: is it possible to construct protocols
which perform as well as QT, but nonetheless are different from QT (in particular, including the
use of the physical channel)? More generally, what are the {\it necessary} resources for
deterministically and faithfully protecting quantum states sent through $\varepsilon$?

As a first step in this analysis, in this work we prove that QT is the only protocol satisfying our requirements,
assuming that: (i) $\varepsilon$ has maximal rank $N^2$~\footnote{By definition, the rank of $\varepsilon$ is
minimal number of Kraus operator needed to represent the channel in the Operator Sum Representation},
and (ii) $\varepsilon$ can be used only once. Therefore, the resources of QT are {\it necessary and
sufficient} for perfect transmission when a single use of a maximally-ranked channel is admitted, and a
cheaper set of resources does not exist.

There are situations in which this result is not surprising. Assume for instance that
$\varepsilon$ is the $p-$depolarizing channel for a single qubit,
\begin{equation}\label{comdepcha}
    \varepsilon [\rho^{\rA}] = p \frac{I}{2} + (1 - p) \rho^{\rA},
\end{equation}
with $\frac{2}{3} \leqslant p \leqslant 1$, where $I$ is the $2 \times 2$ identity matrix.
It is known that such a noisy channel can be simulated by local actions and classical communication,
and then it doesn't help to send the state through it~\footnote{For $p = \frac{2}{3}$, $\varepsilon$ is the classical
limit of teleportation}. Since for disembodied faithful and deterministic transmission of a quantum state, maximal entanglement shared between
A and B is necessary~\cite{benn2}, we get the expected result. Nevertheless, this argument no longer holds
for other channels, for example when $p < \frac{2}{3}$ in (\ref{comdepcha}), where
the use of $\varepsilon$ might, in principle, reduce the required amount of shared entanglement.
Moreover, it does not hold whenever classical communication is not a priori considered a free resource.
Therefore, it is justified to analyze what are the minimal resources needed for the perfect correction of channels
of maximal rank.


{\it Quantum teleportation.---} Let us briefly outline the QT scheme in its standard
form \cite{benn2}. Upper A (B) denotes quantities referring to the input
(output) of the channel, as well as the sender and the receiver; the corresponding small
letters label the respective auxiliary systems. The two parties
share the maximally entangled state $\Psi_0^{\ra \rb} = \vert \psi_0 \rangle^{\ra \rb} \langle \psi_0 \vert$, where
\begin{equation}\label{bell}
    \vert \psi_0 \rangle^{\ra \rb} = \frac{1}{\sqrt{N}} \sum_{i = 0}^{N - 1} \vert i \rangle^{\ra} \otimes \vert i \rangle^{\rb}.
\end{equation}
The sender performs a Bell measurement on the space labeled by A and a,
defined by the set of projectors $\Psi_{\eta}^{\rA \ra} = \vert \psi_{\eta} \rangle^{\rA \ra} \langle \psi_{\eta} \vert$, where
\begin{equation}\label{bell}
    \vert \psi_{\eta} \rangle^{\rA \ra} = \frac{1}{\sqrt{N}} \sum_{k = 0}^{N - 1} e^{2 \pi i k \frac{n}{N}}
    \vert k \rangle^{\rA} \otimes \vert (k + m)_N \rangle^{\ra},
\end{equation}
$\eta = 0, \ldots, N^2 - 1$, $n = \eta \,{\rm div}\, N$,
$m = \eta \,{\rm mod}\, N$ (that is, $\eta = n N + m$), and we used the notation $(i)_N = i \,{\rm mod}\, N$.
Then, A sends the measurement outcome $\eta$ to B through a noiseless classical channel.
According to this result, B applies the unitary operator given by
\begin{equation}\label{unitqt}
    U_{\eta}^{\rB \rb} = U^{\rB \rb}_{swap} \sum_{k = 0}^{N - 1} e^{2 \pi i k \frac{n}{N}} I^{\rB} \otimes \vert k \rangle^{\rb}
    \langle (k + m)_N \vert,
\end{equation}
reproducing the unknown initial state on his side.
The swap operator $U_{swap}^{\rB \rb}$ and the identity $I^{\rB}$ have been introduced such that
the output appears in the output of the channel rather than in the corresponding
auxiliary system; we prefer to discard the ancillae at the end of
the procedure. The complete protocol is thus described by
\begin{equation}\label{telepst}
    \rho^{\rB} = \sum_{\eta = 0}^{N^2 - 1} {\rm Tr}_{\ra \rb} \Bigl( U^{\rB \rb}_{\eta} \varepsilon [ \Psi^{\rA \ra}_{\eta} \rho^{\rA}
    \otimes \Psi_0^{\ra \rb} \Psi^{\rA \ra}_{\eta}] U_{\eta}^{\rB \rb \dagger} \Bigr),
\end{equation}
and it turns out that $\rho^{\rB} = \rho^{\rA}$ for all states $\rho^{\rA}$, and all noisy channels $\varepsilon$.

The first experimental realizations of the QT protocol were reported in~\cite{bosc,zeil}.
In general, the implementation of this protocol is difficult, as it requires non-local resources.
In any real setting, to share a maximally entangled state is demanding. If the entanglement is not maximal,
the standard QT protocol provides unfaithful transmission~\cite{benn2}; an arbitrary mixed state used as entangled
resource for the teleportation results in state depolarization~\cite{bowe}. Several schemes for approaching faithful transmission have
been proposed, prominent examples being entanglement distillation~\cite{benn3}, conclusive QT~\cite{mor,son},
and multiple QT~\cite{modl}. However, none of these protocols is deterministic.

We now consider the most general protocol including transmission through the physical channel $\varepsilon$, to
compare the resources it requires for deterministic and faithful transmission with those corresponding to QT.


{\it Impact of the available resources on noisy channels.---}
We assume that A and B can perform the following operations (which we also call {\it resources}): a) attach auxiliary
systems (ancillae), compactly represented by $\rho^{\ra \rb}$, to the main system, or discard them; b) perform local
operations (unitary transformations and projective measurements) on their composite systems; c) exchange classical
communication via a noiseless classical channel. This set of operations  is generically denoted by LOCC when $\rho^{\ra \rb}$
is a separable state. In particular, arbitrary completely positive local operations or generalized measurements (POVMs)
can be performed by A and B in this way; classical communication can be used to correlate the operations performed by A
and B. Pre-shared entanglement is accounted for through an entangled global state $\rho^{\ra \rb}$. Entanglement provides
a non-classical correlation between the operations performed by A and B.

Usually, LOCC operations are considered easy to perform, and it is assumed that there is no limitation in their
implementation. In what follows, we adopt a more general point of view, and assume that only uncorrelated local
operations are easily implementable~\footnote{Usually, classical communication is easily implementable; however,
it could imply additional, expensive resources. For example, in long-distance quantum communication via quantum repeaters,
more classical communication requires longer waiting times, and thus for better memories~\cite{hart}}. 
In this way, we are able to find constraints referring to both shared entanglement
and classical communication.

We argue that we can limit our attention to one-way classical communication without loss of generality. In fact,
any measurement performed on one side can be simulated, at least in principle, by a measurement performed on the
other side, followed by local operations~\cite{lo}. Therefore, only the measurement outcome obtained by A before
the state transmission is essential. The composite action of resources a) - c) leads to the state evolution
\begin{equation}\label{telepstgen}
    \rho^B = \tilde{\varepsilon} [\rho^{\rA}] = \sum_{\eta = 0}^{M - 1} {\rm Tr}_{\ra \rb} L_{\eta \eta^{\prime}}^{\rB \rb} \varepsilon
    [L_{\eta}^{\rA \ra} \rho^{\rA} \otimes \rho^{\ra \rb} L_{\eta}^{\rA \ra \dagger}] L_{\eta \eta^{\prime}}^{\rB \rb \dagger},
\end{equation}
where $L^{\rA \ra}_{\eta}$ and $L^{\rB \rb}_{\eta \eta^{\prime}}$ are the local operations performed on the A and B sides,
including ancillae, and $\eta$, $\eta^{\prime}$ label different measurement outcomes in A and B respectively. In full generality,
$L^{\rA \ra}_{\eta} = \Pi^{\rA \ra}_{\eta} U^{\rA \ra}_{\eta}$, where $U^{\rA \ra}_{\eta}$ are unitary operators, $\Pi^{\rA \ra}_{\eta}$
projections, and similar relations holds for $L^{\rB \rb}_{\eta \eta^{\prime}}$. The state $\rho^{\ra \rb}$ is
assumed to be pure~\footnote{Mixed states can be included by linearity in the general treatement}, $\rho^{\ra \rb} = \vert \psi \rangle^{\ra \rb} \langle \psi \vert$, with Schmidt decomposition
\begin{equation}\label{schmidt}
    \vert \psi \rangle^{\ra \rb} = \sum_{i = 0}^{P - 1} \mu_i \vert i \rangle^{\ra} \otimes \vert i \rangle^{\rb}.
\end{equation}
The Schmidt coefficients $\mu_i \geqslant 0$, with $\sum_i \mu_i^2 = 1$, account for pre-existing
entanglement shared between A and B, prepared off-line before transmitting the signal state; $P$ is the
dimension of the individual systems owned by sender and receiver. Finally,
the index $\eta$ labels $M$ different operations, correlated using classical communication.
Note that QT, as in (\ref{telepst}), is a special case of (\ref{telepstgen}).

The impact of all these resources can be interpreted as a form of control over the noisy channel $\varepsilon$,
with the ultimate goal to obtain the identity channel for a suitable choice of the operations a) - c). Equation (\ref{telepstgen})
defines the map $\varepsilon \rightarrow \tilde{\varepsilon} = \lambda [\varepsilon]$, with the target
$\tilde{\varepsilon} = {\mathcal I}$. This map is the mathematical representation of the protocol
realized through the specific resources that have been considered. To characterize this map, it is convenient to use the Choi-Jamiolkowski
isomorphism between completely positive trace-preserving maps and positive, unit-trace operators $\varepsilon \simeq R^{\rB \rA}$ acting
on the joint space of B and A \cite{jami}, defined by the action of the channel on half of the maximally entangled state,
\begin{equation}\label{choi}
    R^{\rB \rA} = \varepsilon \otimes I^{\rA} [\Psi_0^{\rA \rA}].
\end{equation}
Using this isomorphism, the control action is expressed by the linear, completely positive map
\begin{equation}\label{control}
    \tilde{R}^{\rB \rA} = \lambda [R^{\rB \rA}] = \sum_{\eta = 0}^{M - 1} \sum_{k, l = 0}^{N^2 - 1}
    \Lambda^{\eta}_{k, l} R^{\rB \rA} \Lambda^{\eta \dagger}_{k, l},
\end{equation}
with operators $\Lambda^{\eta}_{k,l}$ that depend on the aforementioned resources as~\footnote{For simplicity, we denote by
$\lambda$ both the map acting on $\varepsilon$ and the map acting on the corresponding $R^{\rB \rA}$, although they are different},
\begin{equation}\label{kraus}
    \Lambda^{\eta}_{k, l} = \sum_{i = 0}^{N^2 - 1} \mu_i B_{k,i}^{\eta} \otimes A^{\eta T}_{l,i}.
\end{equation}
In particular,
\begin{equation}\label{local}
    A_{i,j}^{\eta} = \langle i \vert^{\ra} L_{\eta}^{\rA \ra} \vert j \rangle^{\ra}, \quad B_{i,j}^{\eta} = \langle i
    \vert^{\rb} L_{\eta \eta^{\prime}}^{\rB \rb} \vert j \rangle^{\rb},
\end{equation}
and the bases $\{ \vert i \rangle^{\ra}, i \}$ and $\{ \vert i \rangle^{\rb}, i \}$ are defined by the Schmidt decomposition (\ref{schmidt}).
In general, the map $\lambda$ is non-trace preserving (more details can be found in~\cite{roma}).

Using this representation, the desired channel manipulation $\varepsilon \rightarrow \tilde{\varepsilon} = {\mathcal I}$ reads
\begin{equation}\label{theooothe}
    \sum_{\eta} \sum_{k, l} \Lambda^{\eta}_{k, l} R^{\rB \rA} \Lambda^{\eta \dagger}_{k, l} = \Psi_0^{\rB \rA}.
\end{equation}
By this approach, we mapped the problem into a geometric control problem of steering an initial
state to a desired target state through a protocol $\lambda$.


{\it Faithful and deterministic transmission through $\varepsilon$.---}
With the formalism introduced so far, we can derive the main result of this work. We stress that
it is assumed that operations of type b) can be performed without limitations; we shall analyze the
constraints over classical communication and shared entanglement.

\begin{theorem}\label{teorema}
The resources of QT, i.e., classical communication and maximal entanglement,
are necessary and sufficient to deterministically map a trace-preserving noisy channel $\varepsilon$
with maximal rank $N^2$ into $\mathcal I$. In other word, they are necessary and sufficient to provide
deterministic and faithful transmission through $\varepsilon$.
\end{theorem}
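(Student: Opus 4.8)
{\it Proof proposal.---} Sufficiency needs no work: protocol~(\ref{telepst}) realizes $\tilde\varepsilon=\mathcal I$ for an arbitrary $\varepsilon$ while consuming exactly one $N\times N$ maximally entangled state and one round of one-way classical communication. So the plan is entirely about necessity: I want to show that the maximal-rank hypothesis forces the control operators~(\ref{kraus}) into a rigid form, and that this form is QT.

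\emph{Step 1: collapse to rank one.} Since $\varepsilon$ has maximal rank $N^2$, its Choi operator $R^{\rB\rA}$ in~(\ref{choi}) is strictly positive. In the target identity~(\ref{theooothe}) the left-hand side is a sum of positive operators $\Lambda^{\eta}_{k,l}R^{\rB\rA}\Lambda^{\eta\dagger}_{k,l}$ whose total equals the rank-one projector $\Psi_0^{\rB\rA}$; hence the range of each summand lies in $\mathrm{span}\{|\psi_0\rangle^{\rB\rA}\}$, and because $R^{\rB\rA}>0$ that range coincides with the range of $\Lambda^{\eta}_{k,l}$. Therefore every $\Lambda^{\eta}_{k,l}$ has rank at most one and can be written $\Lambda^{\eta}_{k,l}=|\psi_0\rangle^{\rB\rA}\langle v^{\eta}_{k,l}|$, with $\sum_{\eta,k,l}\langle v^{\eta}_{k,l}|R^{\rB\rA}|v^{\eta}_{k,l}\rangle=1$ from $\mathrm{Tr}\,\Psi_0^{\rB\rA}=1$. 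This is the only point at which maximality of the rank enters, and it is indispensable: for channels of lower rank the collapse fails and cheaper protocols can exist.

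\emph{Step 2: identification with QT.} Substituting the explicit form~(\ref{kraus}) into $\Lambda^{\eta}_{k,l}=|\psi_0\rangle^{\rB\rA}\langle v^{\eta}_{k,l}|$ and evaluating on the product vectors $|b\rangle^{\rB}\otimes|a\rangle^{\rA}$ shows that for all $\eta,k,l,b,a$ the vector $\sum_i\mu_i\,(B^{\eta}_{k,i}|b\rangle)\otimes(A^{\eta T}_{l,i}|a\rangle)$ is proportional to the maximally entangled vector $\sum_c|c\rangle^{\rB}\otimes|c\rangle^{\rA}$. It is convenient to rephrase this in the channel picture: expanding $\varepsilon$ in $N^2$ linearly independent Kraus operators, which span all $N\times N$ matrices, the requirement $\tilde\varepsilon=\mathcal I$ becomes equivalent to $L^{\rB\rb}_{\eta\eta^{\prime}}(E\otimes I^{\ra\rb})L^{\rA\ra}_{\eta}(|x\rangle^{\rA}\otimes|\psi\rangle^{\ra\rb})=|x\rangle^{\rB}\otimes|w_{\eta}[E]\rangle^{\ra\rb}$ for \emph{every} matrix $E$ and every $|x\rangle^{\rA}$, with $E\mapsto|w_{\eta}[E]\rangle$ linear and $\sum_{\eta}\||w_{\eta}[E_s]\rangle\|^2$ normalized appropriately. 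Feeding $E$ through the matrix units $|k\rangle\langle l|$ and through $E=I$, and combining with $\sum_{\eta}L^{\rA\ra\dagger}_{\eta}L^{\rA\ra}_{\eta}=I^{\rA\ra}$ (which follows from trace preservation of $\mathcal I$ via the chain $\mathrm{Tr}\,\rho=\mathrm{Tr}\,\tilde\varepsilon[\rho]\le\mathrm{Tr}(\rho\otimes\rho^{\ra\rb}\sum_{\eta}L^{\rA\ra\dagger}_{\eta}L^{\rA\ra}_{\eta})\le\mathrm{Tr}\,\rho$), I would deduce that on A's side the operation must be a generalized Bell measurement on $\mathcal{H}^{\rA}\otimes\mathcal{H}^{\ra}$: this forces the subspace of $\ra$ correlated with $\rb$ to be $N$-dimensional with all Schmidt coefficients equal (so $|\psi\rangle^{\ra\rb}$ is maximally entangled, extra Schmidt components being annihilated by the $A^{\eta}$), and it forces $N^2$ distinct outcomes $\eta$, hence $M=N^2\ge2$ so that classical communication is indispensable. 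Up to local unitaries that do not change the resource count, the resulting protocol is precisely QT.

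\emph{Main obstacle.} The delicate part is Step 2: turning the single algebraic constraint of Step 1 into rigid statements about the Schmidt spectrum of $|\psi\rangle^{\ra\rb}$ and about $M$, while correctly discarding the null branches ($v^{\eta}_{k,l}=0$) and pinning down the residual local-unitary freedom. The bookkeeping between the ancilla bases of~(\ref{kraus})--(\ref{local}) and the Schmidt basis~(\ref{schmidt}), together with the measurement-completeness relations, is where the real effort lies; Step 1 and the passage to the channel picture are routine once the maximal-rank hypothesis is in hand.
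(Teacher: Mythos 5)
Your Step~1 is sound and is essentially the paper's own use of the maximal-rank hypothesis: since $R^{\rB\rA}>0$, every branch $\Lambda^{\eta}_{k,l}R^{\rB\rA}\Lambda^{\eta\dagger}_{k,l}$ must have range in ${\rm span}\{\vert\psi_0\rangle^{\rB\rA}\}$, forcing each $\Lambda^{\eta}_{k,l}$ to be rank one; evaluated on a product basis this is exactly the paper's condition (\ref{conimp2}). The genuine gap is that your Step~2 --- the step that actually contains the theorem --- is a statement of intent rather than an argument. You assert that the constraints ``force'' the Schmidt coefficients to be equal, force $M\geqslant 2$, and even force $M=N^2$ with the protocol literally equal to QT up to local unitaries, but you give no mechanism by which the rank-one condition plus the completeness relations produces a quantitative bound on the $\mu_i$ or on $M$. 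That last claim is also stronger than the theorem needs (and stronger than the paper proves): the statement is about necessary \emph{resources}, i.e., $M>1$ and maximal entanglement of Schmidt rank $N$, not about uniqueness of the protocol up to local unitaries, and nothing in your outline delivers $M=N^2$.

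The missing mechanism in the paper is short but essential. Writing the deterministic-protocol constraints as the completeness relations (\ref{rela}), one contracts (\ref{conimp}) with its complex conjugate and sums over the free indices: for $M=1$ the relations (\ref{rela}) collapse the left-hand side to $\sum_i\mu_i^2$ while the right-hand side sums to $NP$, contradicting $\sum_i\mu_i^2=1$ and proving that classical communication ($M>1$) is necessary. For general $M$ one instead applies Cauchy--Schwarz to the left-hand side of (\ref{conimp2}), sums over $k,l,n,\eta$ using (\ref{rela}), and obtains $\sum_i\mu_i\geqslant\sqrt{N}$; combined with $\sum_i\mu_i^2=1$ this forces $P\geqslant N$, with $\mu_i=1/\sqrt{N}$ when $P=N$, and Nielsen's majorization theorem handles $P>N$. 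Some counting argument of this type (or an equivalent one) has to be supplied before your ``identification with QT'' can be claimed; as written, the proposal stops exactly where you yourself locate the main obstacle.
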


\begin{proof}
Sufficiency follows from the universality of the QT protocol. Therefore, we only need to prove that
classical communication and maximal entanglement are necessary for the desired task. Notice that
a deterministic protocol necessarily requires $\sum_{\eta} L_{\eta}^{\rA \ra \dagger} L_{\eta}^{\rA \ra} =
\sum_{\eta} L_{\eta}^{\rA \ra} L_{\eta}^{\rA \ra \dagger} = I^{\rA \ra}$, and $L_{\eta \eta^{\prime}}^{\rB \rb}$
independent of $\eta^{\prime}$ and unitary for all $\eta$. Therefore, the operators in (\ref{local}) must
satisfy
\begin{eqnarray}\label{rela}
  \sum_{\eta} \sum_k A^{\eta}_{i,k} A_{j,k}^{\eta \dagger} &=& \sum_{\eta} \sum_k A_{k,i}^{\eta \dagger} A^{\eta}_{k,j} = \delta_{ij} I^{\rA},
 \nonumber \\
  \sum_k B^{\eta}_{i,k} B_{j,k}^{\eta \dagger} &=& \sum_k B_{k,i}^{\eta \dagger} B^{\eta}_{k,j} = \delta_{ij} I^{\rB}.
\end{eqnarray}
To prove that classical
communication is needed, we first assume this is not the case by imposing $M = 1$ and dropping the label $\eta$,
looking for a contradiction. Consider $R \simeq \varepsilon$, and write its spectral decomposition as
$R = \sum_i r_i \vert r_i \rangle^{\rB \rA} \langle r_i \vert$. Notice that $r_i > 0$ for all $i = 0, \ldots, N^2 - 1$,
since $\varepsilon$ has maximal rank. The protocol $\lambda $ satisfies (\ref{theooothe}) if and only if it
deterministically maps $\vert r_i \rangle^{\rB \rA} \langle r_i \vert$ into $\Psi_0^{\rB \rA}$ for all $i = 0, \ldots, N^2 - 1$.
Since $\{\vert r_i \rangle^{\rB \rA}, i \}$ is a complete set, it can be replaced by an arbitrary orthonormal set $\{
\vert v_i \rangle^{\rB \rA}, i\}$, and it is always possible to write
\begin{equation}\label{ort}
    \Lambda_j \vert v_i \rangle^{\rB \rA} = \delta_{ij} \vert \psi_0 \rangle^{\rB \rA},
\end{equation}
by suitably redefining the operators $\Lambda_j$. The index $j$ embodies all the needed indices appearing in
(\ref{kraus}). If we choose a set of factorized vectors, $\vert v_i \rangle^{\rB \rA} = \vert m \rangle^B \otimes
\vert n \rangle^A$, from (\ref{ort}) we obtain
\begin{equation}\label{conimp}
    \sum_i \mu_i \langle n \vert A_{l,i} B_{k,i} \vert m \rangle = \sqrt{N} \, \delta_{km} \delta_{ln},
\end{equation}
where unnecessary upper indices have been omitted. We multiply (\ref{conimp}) by its complex conjugate,
sum over the indices $n, k, l$, and use (\ref{rela}), and finally find the contradiction $\sum_i \mu_i^2 = N P$.
We conclude that classical communication is a needed resource, so $M > 1$.

Restoring the label $\eta$, and proceeding as before, we find that
\begin{equation}\label{conimp2}
    \sum_i \mu_i \langle n \vert A^{\eta}_{l,i} B^{\eta}_{k,i} \vert m \rangle = \sqrt{N} \,
    \beta_{\eta} \, \delta_{km} \delta_{ln}.
\end{equation}
The l.h.s. of (\ref{conimp2}) can be seen as an inner product; by using the Cauchy-Schwarz
inequality, we can write
\begin{equation}\label{schwar}
    \sum_{ijpq} \mu_i \mu_p \vert \langle n \vert A_{li}^{\eta} \vert j \rangle \vert^2
    \vert \langle m \vert B_{kp}^{\eta \dagger} \vert q \rangle \vert^2 \leqslant
    N \, \vert \beta_{\eta} \vert^2 \delta_{km} \delta_{ln}.
\end{equation}
Now we sum over the indices $k, l, n$ and $\eta$, and use the properties (\ref{rela}). Finally, we obtain
$\sum_i \mu_i \geqslant \sqrt{N}$. This relation can be satisfied only if $P \geqslant N$.
For $P = N$, it implies $\mu_i = \frac{1}{\sqrt{N}}$ for all $i$, that is, maximal entanglement.
For $P > N$, since CC has been proven to be necessary, we can use the theorem by Nielsen~\cite{niel2}
characterizing the transitions among pure states under LOCC for bipartite systems. It follows
that it is always possible to manipulate $\rho^{\ra \rb}$ via LOCC, such that $\mu_i = \frac{1}{\sqrt{N}}$ for
$i = 0, \ldots, N - 1$, and the remaining $P - N$ coefficients vanish. Therefore any protocol
needs the resources of QT.
\end{proof}


{\it Conclusions.---} In this work, we have proven that QT is the only protocol that
can faithfully and deterministically correct a noisy channel $\varepsilon$ of maximal rank,
affecting a finite-dimensional system.
To prove this result, we have used a formalism, based on the Choi-Jamiolkowski
isomorphism, providing a connection between the study of noisy channels and
geometric control theory. From this point of view, the desired task corresponds to
mapping an arbitrary Choi-Jamiolkowski state to the Bell state representing the
perfect channel ${\mathcal I}$.

Our results imply that, as long as a single use of the channel is admitted, maximal
entanglement and classical communication between sender and receiver are minimal
resources. This is true even if a physical (full-rank) quantum channel is available, that affects the 
transmission arbitrarily little, while only strictly faithful quantum state transfer is allowed.
A similar situation in encountered in QECCs, where the needed resources for encoding depend on the structure
of the errors, and not on their magnitude. It is an open question whether, as for QECCs,
lowering the rank of $\varepsilon$  might result in a reduction of
the needed resources for the implementation of faithful and deterministic protocols.

\vfill



\begin{thebibliography}{99}


\bibitem{niel} M.A. Nielsen and I.L. Chuang, {\it Quantum Computation and Quantum
Information}, Cambridge, 2000

\bibitem{benn2} C.H. Bennett, G. Brassard, C. Crepeau, R. Jozsa, A. Peres and W.K. Wootters,
Phys. Rev. Lett. 70, 1895 (1993)

\bibitem{benn4} C.H. Bennett and S.J. Wiesner, Phys. Rev. Lett. 69, 2881 (1992)

\bibitem{benn5} C.H. Bennett and G. Brassard, Proc. of IEEE International Conference on Computers, Systems, and Signal
Processing, Bangalore, India, 1984, pp. 175-179

\bibitem{deve}  I. Devetak, A.W. Harrow, and A. Winter, IEEE Trans. Inf. Th. 54, 4587 (2008)

\bibitem{shor} P. W. Shor, Phys. Rev. A 52, R2493 (1995)

\bibitem{knil} E. Knill and R. Laflamme, Phys. Rev. A 55, 900 (1997)

\bibitem{benn} C.H. Bennett, D.P. DiVincenzo, J.A. Smolin, and W.K. Wootters,
Phys. Rev. A 54, 3824 (1996)

\bibitem{benn3} C. H. Bennett, G. Brassard, S. Popescu, B. Schumacher, J. A. Smolin,
and W. K. Wootters, Phys. Rev. Lett. 76, 722 (1996)

\bibitem{bowe} G. Bowen and S. Bose, Phys. Rev. Lett. 87, 267901 (2001)

\bibitem{mor} T. Mor, P. Horodecki, e-print quant-ph/9906039

\bibitem{son} W. Son, J. Lee, M.S. Kim, and Y.-J. Park, Phys. Rev. A 64, 064304 (2001)

\bibitem{modl} J. Modlawska and A. Grudka, Phys. Rev. Lett. 100, 110503 (2008)

\bibitem{hart} L. Hartmann, B. Kraus, H.-J. Briegel, and W. D\"{u}r, Phys. Rev. A 75, 032310 (2007)

\bibitem{lo} H.K. Lo and S. Popescu, Phys. Rev. A 63, 022301 (2001)

\bibitem{bosc} D. Boschi, S. Branca, F. De Martini, L. Hardy, and S. Popescu,
Phys. Rev. Lett. 80, 1121 (1998)

\bibitem{zeil} D. Bouwmeester, J.-W. Pan, K. Mattle, M. Eibl, H. Weinfurter, and A. Zeilinger,
Nature 390, 575 (1997)

\bibitem{brun} T. Brun, I. Devetak, and M.H. Hsieh, Science 314, 436 (2006)

\bibitem{jami} A. Jamiolkowski, Rep. Math. Phys. 3, 275 (1972)

\bibitem{roma} R. Romano and P. van Loock, {\it Quantum control of noisy channels}, arXiv:0811.3014

\bibitem{niel2} M.A. Nielsen, Phys. Rev. Lett. 83, 436 (1999)










\end{thebibliography}
\end{document}